\documentclass[letterpaper, 10 pt, conference]{ieeeconf}  

\IEEEoverridecommandlockouts                              
\usepackage[utf8]{inputenc}
\usepackage[T1]{fontenc}

\usepackage{cite}
\usepackage{amsfonts, amsmath, amssymb, bbm}
\usepackage{color}
\usepackage{comment}
\usepackage{import}
\usepackage{mathtools}
\usepackage[short]{optidef}
\usepackage{graphicx}
\usepackage{hyperref}
\usepackage{algorithm}
\usepackage{tikz}
\usepackage{booktabs}
\usepackage{graphicx}
\usepackage[table]{xcolor}

\definecolor{tablegray}{RGB}{244,246,248}
\definecolor{sectiongray}{RGB}{225,230,236}

\newtheorem{definition}{Definition}
\newtheorem{theorem}{Theorem}

\newtheorem{lemma}{Lemma}

\newtheorem{example}{Example}
\newtheorem{remark}{Remark}

\newcommand{\disturbance}{\mathbf{\widehat w}}
\newcommand{\operator}{\mathcal{M}}

\title{\LARGE \bf Context-Enriched Performance Boosting via Operator Decomposition}

\author{Leonardo Massai*, Sebastiano Messina**, Nicolas Kirsch*, and Giancarlo Ferrari-Trecate*%
\thanks{This research has been supported by the Swiss National Science Foundation under the NCCR Automation (grant agreement 51NF40\_180545) and the NECON project (grant number 200021-219431).}%
\thanks{* Authors with the Institute of Mechanical Engineering, Ecole Polytechnique Fédérale de Lausanne (EPFL), CH-1015 Lausanne, Switzerland (e-mail: l.massai@epfl.ch, nicolas.kirsch@epfl.ch, giancarlo.ferraritrecate@epfl.ch).}%
\thanks{** Author with the Department of Mathematical Sciences, Politecnico di Torino, IT-10129 Torino, Italy (e-mail: sebastiano.messina@polito.it).}%
}

\begin{document}

\maketitle
\thispagestyle{empty}
\pagestyle{empty}

%
\begin{abstract}
Performance Boosting (PB) is a control framework that, for a pre-stabilized
system subject to $\mathcal L_p$ process disturbances, parametrizes the
controllers that preserve closed-loop $\mathcal L_p$-stability through a
causal $\mathcal L_p$-stable operator mapping reconstructed disturbances to
corrective control actions. Although this permits optimization over expressive
stability-preserving controllers, learning a desired policy from disturbance
information alone can be difficult. We introduce a structured factorization
for context-enriched, multi-input PB operators. The proposed architecture
combines an $\mathcal L_p$-stable dynamical module that processes reconstructed
disturbances with a uniformly bounded matrix-valued mixer depending on
disturbances and contextual signals. Under the standard PB assumptions, this
factorization preserves closed-loop $\mathcal L_p$-stability by construction.
Moreover, on a weighted-envelope disturbance domain, we prove that the
factorization is necessary and sufficient for causal operators satisfying a
context-uniform envelope-preservation property. A numerical moving-gate
navigation experiment demonstrates the advantages of the proposed architecture
over context-agnostic PB, MAD, and reference-aware PB baselines.
\end{abstract}

\section{Introduction}

Modern control systems face increasing complexity in both their dynamics and objectives. In many modern applications, performance objectives depend explicitly on exogenous signals such as dynamic energy prices for battery management \cite{ali2025optimal} or real-time obstacle positions in mobile robotics \cite{10680398}. Traditional controllers, however, remain largely agnostic to these contextual inputs, limiting their effectiveness for sophisticated tasks requiring environmental awareness or anticipatory adaptation.

The expansion of sensor networks and data availability further emphasizes the need for control architectures that explicitly leverage exogenous information. Improving control performance through contextual awareness has therefore become an active research direction. By directly feeding exogenous signals to the controller, the system can better adapt to its operating environment and achieve more complex objectives. 

{Event-triggered control (ETC) provides a mechanism for reacting to online information \cite{tabuada2007event,heemels2012introduction}. The control action is updated when a condition on measured signals is satisfied, rather than periodically, with closed-loop stability certified via Lyapunov-based arguments. In most classical formulations, the triggering condition depends on endogenous signals such as the state or the deviation since the last sampling instant. The primary objective is to reduce communication and computation while preserving stability guarantees. Extensions to tracking settings \cite{tallapragada2013event} and networked control \cite{wang2011event} introduce exogenous signals, such as reference trajectories, but typically within the same stability-and-sampling framework, rather than as a general mechanism for performance optimization conditioned on rich contextual information. }

Several works have investigated this idea within Model Predictive Control (MPC). In \cite{10680398}, contextual information like the positions and velocities of obstacles is dynamically incorporated into an MPC framework for mobile robotics, improving efficiency and safety metrics. Other studies encode context implicitly through learned cost maps. For instance, \cite{goel2023semantically} learns dense navigation cost maps in which contextual relationships between objects and places are captured via spatially varying costs. In \cite{10920056}, nonlinear MPC is applied to car-following in autonomous driving, where the speed and position of the preceding vehicle are used as contextual inputs, enabling more precise tracking of the desired inter-vehicle distance. A popular framework in this direction is Perception-Aware MPC \cite{8593739}, where estimates of environmental states are incorporated into the MPC formulation, effectively parametrizing the optimization problem according to the current environment.

Although these approaches demonstrate strong empirical performance, providing closed-loop guarantees in the presence of time-varying contextual signals remains challenging. Naïvely incorporating such signals into costs and constraints may invalidate standard recursive feasibility and stability proofs in MPC. While certain guarantees can be established under specific assumptions in perception-aware MPC, deriving constructive procedures that ensure these properties remains difficult \cite{bonzanini2024perception}. The design of appropriate terminal costs and constraints becomes especially nontrivial beyond the linear setting.

Learning-based approaches have also emerged as powerful tools for context-aware control because of their inherent ability to process complex and large amounts of information. Such controllers are frequently designed to depend explicitly on exogenous signals such as environment measurements \cite{zimmerman2014neural} or fault indicators \cite{yen2000intelligent}. In \cite{huang2020learning} for example, a learning-based switching controller is proposed to ensure reliability of telecommunication services during faults.  Other works use neural networks to map contextual information to control parameters in parameter-varying systems. In \cite{kon2023directlearningparametervaryingfeedforward}, a time-varying scheduling signal is learned from input-output data, and a neural network is used to map this signal to controller gains. While often successful, the learning-based approaches presented here provide limited closed-loop guarantees.

The Performance Boosting (PB) framework introduced in \cite{10633771} \textcolor{black}{bridges the gap between neural network flexibility and closed-loop guarantees by optimizing over state-feedback policies that inherently preserve $\mathcal{L}_p$ stability}. In subsequent works \cite{11312988} and \cite{furieri2025mad}, \textcolor{black}{his framework was extended to reference tracking (rPB) and reinforcement learning (MAD), allowing the neural
network controller to incorporate specific exogenous inputs like reference signals or state trajectories, while retaining steady-state guarantees.} Promising results for fault compensation in drive systems were demonstrated in \cite{kirsch2025resilient}.
\textcolor{black}{While MAD and rPB successfully incorporated exogenous signals, their formulations were tailored to specific variables. They did not provide a generalized, multi-input, matrix-valued factorization for arbitrary contextual signals. Furthermore, while fields such as Adaptive Control and Event-Triggered Control focus heavily on adapting to changing parameters or environmental states, they generally lack the strict $\mathcal{L}_p$ stability guarantees for arbitrary non-$\mathcal{L}_p$ signals that the PB framework provides.}
This paper extends the PB framework to general contextual signals via a structured operator decomposition, yielding control policies that map process noise and contextual signals to corrective actions while preserving $\mathcal L_p$-stability. The proposed factorization isolates a stable disturbance-processing dynamical module from a uniformly bounded context-dependent mixing mechanism, providing an interpretable architecture that can simplify learning-based synthesis and improve closed-loop performance. Under mild conditions on the disturbances, we further establish that this factorization is necessary and sufficient. We also present a numerical experiment illustrating the resulting gains in a representative PB setting.

\section{Notation and Basic Definitions}
\label{sec:notation}

Vectors are denoted by lowercase letters, matrices by uppercase letters, and sequences by bold letters.
For a sequence $\mathbf{A}=(A_t)_{t\ge 0}$, we denote by $A_t^{i,j}$ the $(i,j)$-th entry of $A_t$. For every finite-dimensional space $\mathbb R^d$, we fix a vector norm
$\|\cdot\|$. For a matrix $A\in\mathbb R^{m\times s}$, the same symbol
denotes the corresponding induced norm,
$\|A\|
:=
\sup_{v\neq0}\frac{\|Av\|}{\|v\|}.$
Consequently,
$\|Av\|\leq\|A\|\,\|v\|,
\qquad
A\in\mathbb R^{m\times s},\quad v\in\mathbb R^s.$
\begin{definition}[$\ell^{n\times m}$ and $\ell_p^{n\times m}$ spaces]
Let $\ell^{n\times m}$ be the vector space of all matrix-valued sequences
$\mathbf{A}=(A_t)_{t\ge 0}$ with $A_t\in\mathbb R^{n\times m}$.
For $p\in[1,\infty)$, define \(
\|\mathbf{A}\|_{p} := \left(\sum_{t=0}^{\infty} \|A_t\|^{p}\right)^{\frac{1}{p}},
\) and let $\ell_p^{n\times m}:=\{\mathbf{A}\in \ell^{n\times m}:\|\mathbf{A}\|_p<\infty\}$.
Moreover, $\ell_\infty^{n\times m}:=\{\mathbf{A}\in \ell^{n\times m}:\|\mathbf{A}\|_\infty<\infty\}$ where
$\|\mathbf{A}\|_\infty := \sup_{t\ge 0}\|A_t\|$.
\end{definition}

\begin{definition}[Causal operator]
An operator $T:\ell^n\to \ell^m$ is causal if
$T(\mathbf{x}) = (T_0(x_0),\,T_1(x_{0:1}),\,\dots,\,T_t(x_{0:t}),\,\dots)$.
\end{definition}

\begin{definition}[$\mathcal L_p$-stable operator]
An operator $T:\ell^n\to \ell^m$ is said to be $\mathcal L_p$-stable if it is causal and
$T(\mathbf{x})\in \ell_p^m$ for all $\mathbf{x}\in \ell_p^n$. We write $T\in\mathcal L_p$.
\end{definition}
We now introduce the timewise matrix--vector product that will be used throughout the section to
define the proposed factorization.
\begin{definition}[Timewise matrix--vector product]\label{def:timewise-product}
Let $\mathbf{A}=(A_t)_{t\ge 0}\in \ell^{r\times s}$ and $\mathbf{v}=(v_t)_{t\ge 0}\in \ell^{s}$.
Define $\mathbf{A}\boxtimes \mathbf{v}\in \ell^{r}$ by
\(
(\mathbf{A}\boxtimes \mathbf{v})_t := A_t v_t,\quad t\ge 0.
\)
\end{definition}

\section{Performance Boosting framework}
\label{sec:pb}

We briefly recall the Performance Boosting (PB) framework and motivate the multi-input operator
studied in this paper. Consider a discrete-time system with dynamics\footnote{Any nonlinear system in the standard state-space form $x_{t+1}= \Tilde{f}_t(x_t,u_t)+w_t$ can be written as in \eqref{eq:system_state}. The interested reader is referred to \cite{10633771} for further details}: \begin{equation}\label{eq:system_state}
    x_t = f_t(x_{0:t-1},u_{0:t-1})+ w_t, ~~~t= 1,2,\ldots\,
\end{equation}
where $x_t \in \mathbb{R}^n$ is the state vector, $u_t \in \mathbb{R}^m$ is the control input and $w_t \in \mathbb{R}^n$ stands for unknown process noise. In operator form, system \eqref{eq:system_state} is: 
\begin{equation}
\label{eq:operator_form_state}
    \boldsymbol{x} = \mathbf{F}(\boldsymbol{x},\mathbf{u}) + \mathbf{w}\,,
\end{equation}
where $\mathbf F:\ell^n\times\ell^m\to\ell^n$ is the strictly causal
operator
\[
\mathbf F(\mathbf x,\mathbf u)
=
\bigl(
0,\,
f_1(x_0,u_0),\,
\ldots,\,
f_t(x_{0:t-1},u_{0:t-1}),\,
\ldots
\bigr).
\]
We adopt the convention
\[
w_0:=x_0,
\qquad
\mathbf w=(w_0,w_1,w_2,\ldots)
          =(x_0,w_1,w_2,\ldots),
\]
where $w_t$, for $t\geq1$, denotes the process disturbance. We assume
$\mathbf w\in\ell_p^n$. Over a finite horizon, disturbance realizations
$\mathbf w_{0:T}$ are sampled from a joint distribution
$\mathcal D_{0:T}$, whose time-$t$ marginal is supported on
$\mathcal W_t\subseteq\mathbb R^n$.

The main objective of PB is to design the control signal $\mathbf{u}$ such that it enhances the performance of the pre-stabilized system \eqref{eq:operator_form_state}, i.e. when $\mathbf{F}\in \mathcal{L}_p$, while preserving its steady-state guarantees\footnote{The steady state gurantees refer here to closed-loop stability, which means that the closed loop maps $\mathbf{w}\mapsto\mathbf{u}$ and $\mathbf{w}\mapsto\mathbf{x}$ are both in $\mathcal{L}_p$  }. To do so, PB relies on an internal-model-control (IMC) principle: a copy of the plant dynamics is used to reconstruct the disturbance signal
$\mathbf{\widehat w}=\mathbf{x}-\mathbf{F}(\mathbf{x},\mathbf{u})$
affecting the evolution of the plant. The boosting control action is then parametrized through an operator acting on this reconstructed signal:
$\mathbf{u}= \operator(\disturbance),$
where \(\operator\) is a causal \(\mathcal L_p\)-stable operator to be designed.

A central result of the PB framework \cite{10633771} is that, when the system model $\mathbf{F}$ is perfectly known, the search for performance-boosting, stability-preserving controllers can be
reduced without loss of generality to the search for suitable \(\mathcal L_p\)-stable operators \(\operator\). This is especially attractive from a learning viewpoint: one can optimize arbitrary performance costs over expressive classes of operators while retaining
closed-loop stability by construction. To keep the problem tractable and to solve it with unconstrained optimization, finite-dimensional and free parametrizations of $\mathcal{L}_p$ operators like Recurrent Equilibrium Networks (RENs)\cite{revay2023recurrent} or structured State Space Models (SSMs) \cite{11312999} are used. Given a control horizon $T$, the optimization problem solved by PB is: 
\begin{equation}\label{eq:opt_pb}
\begin{aligned}
\min_{\theta\in\mathbb R^{n_\theta}}
\quad&
\frac{1}{S}\sum_{s=1}^{S}
\mathcal J\bigl(x_{0:T}^s,u_{0:T-1}^s\bigr)
\\
\operatorname{s.t.}\quad
&x_0^s=w_0^s,
\\
&x_t^s
=
f_t(x_{0:t-1}^s,u_{0:t-1}^s)+w_t^s,
\quad t=1,\ldots,T,
\\
&\widehat w_0^s=x_0^s,
\\
&\widehat w_t^s
=
x_t^s-f_t(x_{0:t-1}^s,u_{0:t-1}^s),
\quad t=1,\ldots,T,
\\
&u_t^s
=
\mathcal M_t(\theta)(\widehat w_{0:t}^s),
\quad t=0,\ldots,T-1,
\end{aligned}
\end{equation}
where all constraints hold for $s=1,\ldots,S$ and
$\mathbf w_{0:T}^s\sim\mathcal D_{0:T}$. The parameter vector is
$\theta\in\mathbb R^{n_\theta}$, and $\mathcal J$ is any piecewise
differentiable loss defined over the realized state and
input trajectories. Problem~\eqref{eq:opt_pb} is typically solved using
backpropagation through time.

In the standard PB framework, the \(\mathcal{L}_p\)-stable operator $\mathcal{M}$ depends only on the reconstructed
disturbance \(\disturbance\). While in principle this allows recovering every stability-preserving controller,
producing a given stabilizing policy using only reconstructed disturbances can require complex operators \cite{furieri2025mad}. {A clear example is the imitation of a simple state-feedback law. Since the standard PB operator only observes the reconstructed disturbances, it would implicitly need to reconstruct the state by re-simulating the plant dynamics before being able to respond to it, a highly complex internal computation. If the state is instead supplied as an additional input, the operator only needs to learn the feedback response itself, a much simpler task.} Therefore, in many applications it is greatly beneficial to enrich the controller with
contextual information. \textcolor{black}{However, because contextual signals (such as state-dependent features or task parameters) are generally not $\ell_p$ sequences, directly feeding them into an operator bounded in $\mathcal{L}_p$ -spaces would fundamentally invalidate the theoretical closed-loop stability guarantees}. This motivates the study of an extended operator of the form
\begin{equation}\label{eq:extended_operator}
    \operator:\ell_p^n\times \ell^q \to \ell_p^m,
\qquad
(\disturbance,\mathbf z)\mapsto \mathbf{u},
\end{equation}
where 
\(\mathbf z\in\ell^q\) is a contextual sequence, \(\disturbance\in\ell_p^n\) the disturbance reconstruction, and such that \(\operator(\disturbance,\mathbf z)\in\ell_p^m\).

\section{Main Results}
\label{sec:main}
The main contribution of this paper is to show that extended operators like the one in \eqref{eq:extended_operator} can be realized by considering a structured factorization that separates
(i) a bounded (``\(\ell_\infty\)'') matrix-valued component depending on \((\disturbance,\mathbf z)\),
and
(ii) a disturbance-driven \(\ell_p\) component depending on \(\disturbance\) only.
The idea is to keep disturbance processing in a causal
\(\mathcal L_p\)-stable block \(\operator_p:\ell_p^n\to\ell_p^s\), inject context through a causal bounded mixer \(
\operator_\infty:\ell_p^n\times \ell^q\to \ell_\infty^{m\times s}.
\)
and compose the maps via the timewise product:
\begin{equation}\label{eq:op-decomp}
\operator(\disturbance,\mathbf z)
:= \operator_\infty(\disturbance,\mathbf z)\boxtimes \operator_p(\disturbance).
\end{equation}


Notice that the factorization \eqref{eq:op-decomp} generalizes the MAD policies \cite{furieri2025mad} as well as the rPB policies \cite{11312988}. Indeed, MAD policies
decompose the control input into an \(\mathcal L_p\)-stable magnitude term and a bounded direction term
depending on the state trajectory. In our framework, this is recovered as the special case \(s=1\),
where \(\operator_p(\disturbance)\) generates the scalar magnitude and \(\operator_\infty(\disturbance,\mathbf z)\) acts as
a bounded \(m\times 1\) mixer, with \(\mathbf z\) chosen as the state sequence (or a bounded function
thereof). The rPB framework corresponds to the special case where the mixer \(\operator_\infty(\disturbance,\mathbf z)\) is diagonal, and the context $\mathbf z$ is the reference to track. Our formulation is more general because it allows arbitrary contextual signals,
multi-dimensional disturbance features, and matrix-valued bounded mixing across multiple control
channels. Finally, the factorization does not prescribe a universal choice of the latent
dimension $s$. The abstract representation result holds for every $s\geq 1$;
thus, $s$ should be interpreted as a practical architectural hyperparameter
controlling the dimension of the disturbance features supplied to the mixer,
rather than as a requirement for representability. In the numerical experiment
of Section~\ref{sec:exp-gate}, performance varied only marginally over a broad
range of values of $s$, although the most appropriate choice may generally
depend on the task and on the selected finite-dimensional parametrizations.

{From a practical standpoint, both $\operator_p$ and $\operator_\infty$ admit simple finite-dimensional parametrizations suitable for unconstrained optimization. The module $\operator_p$ belongs to the same class of causal $\mathcal L_p$-stable operators as $\operator$ in the traditional PB setting, and can therefore be parametrized using RENs \cite{revay2023recurrent} or stable SSMs \cite{11312999}. As for $\operator_\infty$, any neural network with a bounded activation function (e.g., sigmoid or tanh) on its output layer produces a uniformly bounded output, regardless of its internal architecture, and thus satisfies the required constraint by construction.}

\subsection{Operator decomposition and \texorpdfstring{$\mathcal L_p$}{Lp}-stability}
We first show that the proposed factorization is always sufficient to preserve
\(\mathcal L_p\)-stability: the timewise product of a bounded mixer and an \(\ell_p\) disturbance
feature sequence remains in \(\ell_p\).
We use throughout the vector and induced matrix norms introduced in
Section~\ref{sec:notation}.

\begin{theorem}[Sufficient condition]
\label{thm:suff}
Let
\[
\operator_p:\ell_p^n\to\ell_p^s,
\qquad
\operator_\infty:
\ell_p^n\times\ell^q\to\ell_\infty^{m\times s}
\]
be causal operators. Define
\[
\operator(\disturbance,\mathbf z)
:=
\operator_\infty(\disturbance,\mathbf z)
\boxtimes
\operator_p(\disturbance).
\]
Then $\operator$ is causal and
\[
\operator(\disturbance,\mathbf z)\in\ell_p^m,
\qquad
\forall\disturbance\in\ell_p^n,\quad
\forall\mathbf z\in\ell^q.
\]
Moreover,
\begin{equation}
\label{eq:suff-pointwise}
\|\operator(\disturbance,\mathbf z)\|_p
\leq
\|\operator_\infty(\disturbance,\mathbf z)\|_\infty
\|\operator_p(\disturbance)\|_p.
\end{equation}
\end{theorem}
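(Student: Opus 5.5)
The plan is to verify causality and the $\ell_p$ bound separately, working timewise from the definition of $\boxtimes$ and the submultiplicativity $\|Av\|\le\|A\|\,\|v\|$ fixed in Section~\ref{sec:notation}.

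\textbf{Causality.} Write $A_t:=\operator_{\infty,t}(\widehat w_{0:t},z_{0:t})$ and $v_t:=\operator_{p,t}(\widehat w_{0:t})$. This uses the causality of $\operator_\infty$ and of $\operator_p$. Then
\[
\operator_t(\widehat w_{0:t},z_{0:t})=A_t v_t
\]
depends only on $(\widehat w_{0:t},z_{0:t})$. This is exactly the causality requirement of the definition, extended in the obvious way to the two-argument operator.

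\textbf{Pointwise bound.} Fix $\disturbance\in\ell_p^n$ and $\mathbf z\in\ell^q$. Set $\mathbf A=\operator_\infty(\disturbance,\mathbf z)\in\ell_\infty^{m\times s}$ and $\mathbf v=\operator_p(\disturbance)\in\ell_p^s$. For every $t$,
\[
\|A_tv_t\|\le\|A_t\|\,\|v_t\|\le\|\mathbf A\|_\infty\|v_t\|.
\]

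\textbf{Summation.} For $p\in[1,\infty)$, raise this bound to the $p$-th power and sum over $t$:
\[
\sum_t\|A_tv_t\|^p\le\|\mathbf A\|_\infty^p\sum_t\|v_t\|^p .
\]
Taking the $p$-th root gives \eqref{eq:suff-pointwise}. The right-hand side is finite, so $\operator(\disturbance,\mathbf z)\in\ell_p^m$. For $p=\infty$, take the supremum over $t$ instead, which gives the same inequality.

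There is no real obstacle here. The only points needing care are that the bound on $\|\mathbf A\|_\infty$ depends on $(\disturbance,\mathbf z)$, which is harmless because the estimate holds for each fixed input, and that the case $p=\infty$ is handled separately.
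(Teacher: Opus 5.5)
Your proof is correct and follows the paper's route: causality comes from the timewise definition of $\boxtimes$ together with the causality of both factors, and the norm bound comes from $\|A_tv_t\|\le\|A_t\|\,\|v_t\|\le\|\mathbf A\|_\infty\|v_t\|$, summing $p$-th powers, and taking the $p$-th root. Your separate treatment of $p=\infty$ is a harmless addition, since the paper implicitly works with $p\in[1,\infty)$.
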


\begin{proof}
Causality follows from the causality of $\operator_p$ and
$\operator_\infty$, together with the timewise definition of
$\boxtimes$. Moreover, for every $t\geq0$,
\begin{align*}
\|\operator(\disturbance,\mathbf z)_t\|
&=
\left\|
\operator_\infty(\disturbance,\mathbf z)_t
\operator_p(\disturbance)_t
\right\| \\
&\leq
\|\operator_\infty(\disturbance,\mathbf z)_t\|
\|\operator_p(\disturbance)_t\|.
\end{align*}
Therefore,
\[
\sum_{t=0}^{\infty}
\|\operator(\disturbance,\mathbf z)_t\|^p
\leq
\left(
\sup_{t\geq0}
\|\operator_\infty(\disturbance,\mathbf z)_t\|
\right)^p
\sum_{t=0}^{\infty}
\|\operator_p(\disturbance)_t\|^p.
\]
Since
$\operator_\infty(\disturbance,\mathbf z)\in
\ell_\infty^{m\times s}$
and
$\operator_p(\disturbance)\in\ell_p^s$,
the right-hand side is finite. Hence
$\operator(\disturbance,\mathbf z)\in\ell_p^m$, and taking the
$p$-th root gives~\eqref{eq:suff-pointwise}.
\end{proof}

\subsection{Non-conservative factorization in specific noise regime}

Theorem~\ref{thm:suff} shows that the proposed factorization is always sufficient to preserve
\(\mathcal L_p\)-stability. We now study when the converse is true, namely when an extended operator
\(\operator\) can be represented in the form \eqref{eq:op-decomp}. This is not automatic on the full
domain \(\ell_p^n\times \ell^q\). We therefore restrict attention to a subdomain, in which disturbances are transient in a uniform weighted sense. On this
assumption, we will show that the proposed factorization is non-conservative. This regime is
characterized by two ingredients: (i) an admissible class of disturbance signals \(\mathbf w\),
and (ii) a corresponding regularity property of the operator \(\operator\). We start by introducing
the disturbance class.

\begin{definition}[Weighted \texorpdfstring{$\ell_\infty$}{linfty} envelope space]
\label{def:ellinfty-kappa}
Fix \(p\in[1,\infty)\) and let \(\kappa=(\kappa_t)_{t\ge 0}\) be a positive sequence such that \(\sum_{t=0}^{\infty}\kappa_t^{-p}<\infty.\) For \(d\in\mathbb N\), define
{\small
\begin{equation*}
    \ell_{\infty,\kappa}^d
:=
\left\{
\mathbf v\in \ell^d:\ \|\mathbf v\|_{\infty,\kappa}<\infty
\right\}, \quad  \|\mathbf v\|_{\infty,\kappa}
:=
\sup_{t\ge 0}\kappa_t\|v_t\|.
\end{equation*}
}
\end{definition}

The space \(\ell_{\infty,\kappa}^d\) collects sequences whose magnitude is uniformly dominated by the
time-varying template \(\kappa_t^{-1}\). Indeed, it is easy to show that \(\|\mathbf v\|_{\infty,\kappa}<\infty\) is equivalent to
the existence of a constant \(C<\infty\) such that
$\|v_t\|\le C\,\kappa_t^{-1} \: \forall t\ge 0,$
so \(\kappa_t^{-1}\) plays the role of a prescribed decay (or envelope) profile.
The summability condition on $\kappa$ ensures that this template is \(\ell_p\)-summable,
and thus any \(\mathbf v\in\ell_{\infty,\kappa}^d\) automatically belongs to \(\ell_p^d\), with the
explicit bound
\begin{equation}\label{eq:kappa-embed-inline}
\|\mathbf v\|_p
\le
\left(\sum_{t=0}^\infty \kappa_t^{-p}\right)^{\!\frac1p}
\|\mathbf v\|_{\infty,\kappa}.
\end{equation}
If two envelopes \( \kappa^{(1)} \) and \( \kappa^{(2)} \) satisfy
\( \kappa_t^{(1)} \le C\,\kappa_t^{(2)} \) for all \( t\ge 0 \) and some \( C>0 \), then
\( \ell_{\infty,\kappa^{(2)}}^n \subseteq \ell_{\infty,\kappa^{(1)}}^n \). Hence, "slower" envelopes
define larger classes, since they also contain all sequences with faster decay. At the same time,
for any fixed \( \kappa \), the class \( \ell_{\infty,\kappa}^n \) remains broad: because
\( \kappa_t>0 \) for all \( t \), it contains all finitely supported sequences and is therefore dense
in \( \ell_p^n \) for every \( p\in[1,\infty) \). Nevertheless, \( \ell_{\infty,\kappa}^n \) is in
general a strict subset of \( \ell_p^n \). For instance, if \( \kappa_t=(t+1)^\alpha \) with
\( \alpha>1/p \), the sequence \( w_t=2^{-j/p} \) when \( t=2^j \) for some \( j\in\mathbb N \), and
\( w_t=0 \) otherwise, belongs to \( \ell_p \) but not to \( \ell_{\infty,\kappa} \).

\smallskip

We now introduce the second ingredient of the specific regime studied, namely, a mild regularity condition on
the extended operator \(\operator\) over the admissible disturbance class.

\begin{definition}[Envelope-preserving operator]\label{ass:env}
Let \(\operator\) be causal on \(\ell_{\infty,\kappa}^n\times \ell^q\) with values in \(\ell_p^m\).
We say that $\operator$ is envelope-preserving if, for every \(\disturbance\in\ell_{\infty,\kappa}^n\),
\begin{equation}\label{eq:env}
\sup_{\mathbf z\in\ell^q}\,\sup_{t\ge 0}\kappa_t\,\|\operator(\disturbance,\mathbf z)_t\| < \infty.
\end{equation}
\end{definition}

Notice that \eqref{eq:env} is equivalent to requiring that, for every
\(\disturbance\in\ell_{\infty,\kappa}^n\), there exists a finite constant
\(C_{\operator}(\disturbance)>0\), \emph{independent of the context}, such that
\(
\|\operator(\disturbance,\mathbf z)_t\|
\le
C_{\operator}(\disturbance)\,\kappa_t^{-1},
\: \forall t\ge 0,\ \forall \mathbf z\in\ell^q.
\) The uniformity of the bound over \(\mathbf z\) is essential: it rules out
operators whose response can be delayed arbitrarily by the context
(cf.\ Example~\ref{ex:no-envelope} below), and it is exactly the
additional regularity that makes a bounded-mixer factorization possible on the regime studied, as
shown by the next result. Moreover, this definition is not merely abstract: a broad
family of dynamical operators commonly used in modern control applications is envelope-preserving, as we shall see later on.

Using Definition~\ref{ass:env}, we can establish a converse to
Theorem~\ref{thm:suff}: within the considered regime, the bounded-mixer factorization is not only
sufficient for \(\mathcal L_p\)-stability, but it is also \emph{necessary} for operators satisfying the
same uniform envelope regularity. 

\begin{theorem}[Necessary and sufficient factorization]
\label{thm:equiv}
Fix \(p\in[1,\infty)\), let \(\kappa=(\kappa_t)_{t\ge 0}\) be a positive sequence such that \(\sum_{t=0}^{\infty}\kappa_t^{-p}<\infty\)
and let \(s\ge 1\).
Let \(\operator\) be causal on \(\ell_{\infty,\kappa}^n\times \ell^q\). Then the following are equivalent:
\begin{enumerate}
\item[\textup{(i)}]
\(\operator\) is envelope-preserving, i.e., for every \(\disturbance\in\ell_{\infty,\kappa}^n\),
\begin{equation}\label{eq:env-iff}
\sup_{\mathbf z\in\ell^q}\|\operator(\disturbance,\mathbf z)\|_{\infty,\kappa} < \infty.
\end{equation}

\item[\textup{(ii)}]
There exist causal operators
\[
\operator_p:\ell_{\infty,\kappa}^n\to \ell_{\infty,\kappa}^s,
\qquad
\operator_\infty:\ell_{\infty,\kappa}^n\times \ell^q
\to \ell_\infty^{m\times s},
\]
with a context-uniform mixer bound, i.e.,
\(\sup_{\mathbf z\in\ell^q}\|\operator_\infty(\disturbance,\mathbf z)\|_\infty<\infty\)
for every \(\disturbance\in\ell_{\infty,\kappa}^n\),
such that
\begin{equation}\label{eq:env-factor}
\operator(\disturbance,\mathbf z)
=
\operator_\infty(\disturbance,\mathbf z)\boxtimes \operator_p(\disturbance),
\qquad
\forall(\disturbance,\mathbf z).
\end{equation}
\end{enumerate}
\end{theorem}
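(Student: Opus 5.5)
Both implications can be shown directly. The converse direction rests on a canonical envelope feature $\operator_p$ that does not depend on the operator $\operator$ at all. All the context dependence and the operator-specific structure are then pushed into a mixer whose bound, by construction, is the envelope constant in \eqref{eq:env-iff}.

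\emph{(ii)$\Rightarrow$(i).} Fix $\disturbance\in\ell_{\infty,\kappa}^n$. Let $C_\infty(\disturbance):=\sup_{\mathbf z}\|\operator_\infty(\disturbance,\mathbf z)\|_\infty<\infty$. Then, for every $t$ and every $\mathbf z$, submultiplicativity of the induced norm gives
\[
\kappa_t\|\operator(\disturbance,\mathbf z)_t\|\le \|\operator_\infty(\disturbance,\mathbf z)_t\|\,\kappa_t\|\operator_p(\disturbance)_t\|\le C_\infty(\disturbance)\,\|\operator_p(\disturbance)\|_{\infty,\kappa}.
\]
The right-hand side is finite because $\operator_p(\disturbance)\in\ell_{\infty,\kappa}^s$. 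Taking the supremum over $t$ and $\mathbf z$ yields \eqref{eq:env-iff}. This is the same computation as in Theorem~\ref{thm:suff}, carried out in the weighted sup-norm instead of the $\ell_p$ norm.

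\emph{(i)$\Rightarrow$(ii).} Fix a unit vector $e\in\mathbb R^s$, so $\|e\|=1$. Define the constant-template feature
\[
\operator_p(\disturbance)_t:=\kappa_t^{-1}e,\qquad t\ge0.
\]
This is trivially causal, since it ignores its input. It also lies in $\ell_{\infty,\kappa}^s$, with norm $1$. Next, for each $t$ choose a linear functional $\phi$ on $\mathbb R^s$ with $\phi(e)=1$ and dual norm $1$, which exists by Hahn--Banach in finite dimensions. Define the mixer
\[
\operator_\infty(\disturbance,\mathbf z)_t:=\kappa_t\,\operator(\disturbance,\mathbf z)_t\,\phi^\top\in\mathbb R^{m\times s}.
\]
Then $\operator_\infty(\disturbance,\mathbf z)_t\operator_p(\disturbance)_t=\operator(\disturbance,\mathbf z)_t\,\phi(e)=\operator(\disturbance,\mathbf z)_t$, which gives \eqref{eq:env-factor}. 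The mixer is causal because its time-$t$ entry depends only on $\operator(\disturbance,\mathbf z)_t$, which by causality of $\operator$ depends only on $(\disturbance_{0:t},z_{0:t})$. For the rank-one matrix $y\phi^\top$ the induced norm equals $\|y\|\,\|\phi\|_*=\|y\|$. Hence
\[
\|\operator_\infty(\disturbance,\mathbf z)\|_\infty=\sup_t\kappa_t\|\operator(\disturbance,\mathbf z)_t\|,
\]
and its supremum over $\mathbf z$ is exactly the quantity in \eqref{eq:env-iff}, which is finite by (i). This yields both membership in $\ell_\infty^{m\times s}$ and the context-uniform mixer bound.

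The only step needing care is the norm identity $\|y\phi^\top\|=\|y\|\,\|\phi\|_*$ for a general (non-Euclidean) fixed vector norm. This is why $\phi$ is chosen as a norming functional for $e$ rather than simply $e^\top$. For the Euclidean norm, $\phi=e$ works and the identity is immediate. Beyond that, the main conceptual point is that (i) is exactly what makes the mixer bounded: if the constant in \eqref{eq:env-iff} depended on $\mathbf z$, the mixer would still factor $\operator$ but would lose the uniform bound. This is consistent with the role the uniformity plays in the remark following Definition~\ref{ass:env}. The construction works for every $s\ge1$, matching the claim that $s$ is only an architectural hyperparameter.
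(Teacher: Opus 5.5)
Your proof is correct and takes essentially the same approach as the paper. Direction (ii)$\Rightarrow$(i) is the same weighted sup-norm estimate, and for (i)$\Rightarrow$(ii) the paper likewise uses the input-independent template $\kappa_t^{-1}\mathbf 1_s$ with a rank-one mixer whose only nonzero column is $\kappa_t\,\operator(\disturbance,\mathbf z)_t$. The one difference is cosmetic: the paper pairs $\mathbf 1_s$ with $e_1^\top$ and carries the constant $c_s=\sup_{\|v\|=1}|e_1^\top v|$ in the mixer bound, whereas your norming functional $\phi$ makes that bound hold with constant $1$ for any vector norm.
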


\begin{proof}
\emph{(i)$\Rightarrow$(ii).}
Assume \eqref{eq:env-iff}. It is enough to exhibit one admissible factorization.
Define the causal map
\begin{equation}\label{eq:vp-canonical}
\operator_p(\disturbance)_t := \kappa_t^{-1}\,\mathbf 1_s,\qquad t\ge 0,
\end{equation}
and define \(\operator_\infty\) by
\begin{equation}\label{eq:Minfty-canonical}
\big(\operator_\infty(\disturbance,\mathbf z)_t\big)^{i,j}
:=
\begin{cases}
\kappa_t\,\operator(\disturbance,\mathbf z)_t^{i}, & j=1,\\
0, & j\neq 1.
\end{cases}
\end{equation}
Then \eqref{eq:Minfty-canonical}--\eqref{eq:vp-canonical} give
\[
\operator_\infty(\disturbance,\mathbf z)_t\,
\operator_p(\disturbance)_t
=
\operator(\disturbance,\mathbf z)_t
\]
for all \(t\), hence \eqref{eq:env-factor} holds.
Moreover, \(\operator_p(\disturbance)\in \ell_{\infty,\kappa}^s\) since
\[
\|\operator_p(\disturbance)\|_{\infty,\kappa}
=
\sup_{t\ge 0}\kappa_t\|\kappa_t^{-1}\mathbf 1_s\|
=
\|\mathbf 1_s\|<\infty.
\]
Finally, we verify the context-uniform bound on
$\operator_\infty$. Let
$e_1=(1,0,\ldots,0)^\top\in\mathbb R^s$. By construction,
\[
\operator_\infty(\disturbance,\mathbf z)_t
=
\kappa_t\,
\operator(\disturbance,\mathbf z)_t e_1^\top.
\]
Define the finite constant
\[
c_s
:=
\sup_{\|v\|=1}|e_1^\top v|.
\]
The induced-norm definition gives
\[
\|\operator_\infty(\disturbance,\mathbf z)_t\|
\le
c_s\,\kappa_t
\|\operator(\disturbance,\mathbf z)_t\|.
\]
For Euclidean vector norms, $c_s=1$.
Taking the supremum over \(t\) and then over \(\mathbf z\) yields
\(
\sup_{\mathbf z\in\ell^q}\|\operator_\infty(\disturbance,\mathbf z)\|_\infty
\le
c_s\,\sup_{\mathbf z\in\ell^q}\|\operator(\disturbance,\mathbf z)\|_{\infty,\kappa}
<\infty
\)
by \eqref{eq:env-iff}.

\smallskip
\emph{(ii)$\Rightarrow$(i).}
Assume \eqref{eq:env-factor}. For each \(t\ge 0\),
{\small
\begin{align*}
\kappa_t\|\operator(\disturbance,\mathbf z)_t\|
 & =
\kappa_t\|\operator_\infty(\disturbance,\mathbf z)_t
\,\operator_p(\disturbance)_t\| \\
& \le
\|\operator_\infty(\disturbance,\mathbf z)\|_\infty\,
\kappa_t\|\operator_p(\disturbance)_t\|.
\end{align*}}
Taking the supremum over \(t\) and then over \(\mathbf z\) gives
\[
\sup_{\mathbf z\in\ell^q}\|\operator(\disturbance,\mathbf z)\|_{\infty,\kappa}
\le
\Big(\sup_{\mathbf z\in\ell^q}\|\operator_\infty(\disturbance,\mathbf z)\|_\infty\Big)
\|\operator_p(\disturbance)\|_{\infty,\kappa}
<\infty,
\]
since \(\operator_p(\disturbance)\in \ell_{\infty,\kappa}^s\) and the mixer bound is uniform in
\(\mathbf z\). This proves \eqref{eq:env-iff}.
\end{proof}

{\color{black}
\begin{remark}[Interpretation and scope]
\label{rem:typical-pb}
The construction in
\eqref{eq:vp-canonical}--\eqref{eq:Minfty-canonical} is an existence
argument rather than a design prescription. More specifically, to establish the existence of the factorization, the proof selects
the fixed disturbance-independent sequence
$\operator_p(\disturbance)_t
:=
\kappa_t^{-1}\mathbf{1}_s.$
This sequence serves only as a mathematical witness and is not intended to
prescribe the practical implementation of $\operator_p$.

In practice, $\operator_p$ and $\operator_\infty$ are designed directly.
Theorem~\ref{thm:suff} guarantees $\mathcal{L}_p$-stability by construction
whenever $\operator_p$ is $\mathcal{L}_p$-stable and
$\operator_\infty$ is uniformly bounded. Theorem~\ref{thm:equiv} shows that the abstract bounded-mixer
factorization is exact, on the considered weighted-envelope domain, for the
class of causal envelope-preserving operators. This conclusion concerns the
operator classes appearing in the theorem and does not imply completeness of
a particular finite-dimensional parametrization of the two factors.
Nevertheless, it provides a theoretical justification for directly optimizing
expressive parametrizations of $\operator_p$ and $\operator_\infty$, rather
than using the canonical construction employed in the proof.

Consequently, if $\disturbance\in\ell_p^n$ but
$\disturbance\notin\ell_{\infty,\kappa}^n$, the sufficient stability
guarantee remains valid, while the converse representability result is
no longer ensured. 
\end{remark}
}

{\begin{remark}[Robustness to model mismatch]\label{rem:mismatch}
The results above assume perfect knowledge of $\mathbf F$ for the disturbance reconstruction. When only an approximate model $\widehat{\mathbf F}\in\mathcal L_p$ is available, the reconstructed signal becomes $\disturbance = \mathbf w + (\mathbf F-\widehat{\mathbf F})(\mathbf x,\mathbf u)$ and the reconstruction is no longer exact. As shown in \cite{10633771} for PB and in \cite{11312988} for rPB, closed-loop $\mathcal L_p$-stability can nevertheless be preserved through a small-gain argument, provided the gain of the boosting operator is sufficiently small with respect to that of the mismatch $\mathbf F-\widehat{\mathbf F}$. The factorization \eqref{eq:op-decomp} is naturally suited to this robustification. Since the mixer is uniformly bounded by construction with a known constant, any prescribed gain for the overall operator can be enforced by detuning the gain of $\operator_p$ alone, which is directly implementable since the SSM parametrizations used for $\operator_p$ admit prescribed $\mathcal L_2$-gain bounds by design \cite{11312999}. Robustness is thus achieved without constraining the context path: the mixer retains its full expressiveness, at the cost of the usual robustness--performance trade-off induced by reducing the corrective gain.
\end{remark}}

\textcolor{black}{
\begin{example}[Necessity fails on $\ell_p$ without envelopes]\label{ex:no-envelope}
To illustrate why the envelope regime cannot be dispensed with, let
$n=m=q=s=1$ and fix any $\disturbance\in\ell_p$. Consider the causal
``first-trigger echo''
\begin{equation*}
\operator(\disturbance,\mathbf z)_t :=
\begin{cases}
1, & z_t\neq 0 \ \text{and}\ z_s=0 \ \ \forall s<t,\\
0, & \text{otherwise},
\end{cases}
\end{equation*}
whose output has at most one nonzero entry; hence
$\operator(\disturbance,\mathbf z)\in\ell_p$ for \emph{every} $\mathbf z\in\ell^q$,
so $\operator$ is an admissible extended operator of the form
\eqref{eq:extended_operator}. For the trigger contexts defined by
$z^{(T)}_t=1$ if $t=T$ and $z^{(T)}_t=0$ otherwise, one has
$\operator(\disturbance,\mathbf z^{(T)})=\mathbf z^{(T)}$.
Suppose now that $\operator$ admitted a factorization \eqref{eq:op-decomp} on
$\ell_p$ with $\operator_p(\disturbance)\in\ell_p$ and a context-uniform mixer
bound $\sup_{\mathbf z}\|\operator_\infty(\disturbance,\mathbf z)\|_\infty=:B<\infty$.
Evaluating the factorization at $t=T$ yields
\begin{equation*}
1 = \operator_\infty(\disturbance,\mathbf z^{(T)})_T\, \operator_p(\disturbance)_T,
\end{equation*}
whence $\operator_p(\disturbance)_T\neq 0$ and
$|\operator_p(\disturbance)_T|\ge 1/B$ for every $T\ge 0$. This contradicts
$\lim_{t\to\infty}\operator_p(\disturbance)_t=0$, which holds for any
$\ell_p$ sequence with $p\in[1,\infty)$. Hence no factorization with a
context-uniform mixer bound exists on $\ell_p$. Note that a factorization with
a merely input-dependent mixer norm does exist --- e.g.,
$\operator_p(\disturbance)_t=2^{-t}$ and
$\operator_\infty(\disturbance,\mathbf z)_t=2^{t}\operator(\disturbance,\mathbf z)_t$,
whose norm $2^T$ is finite for each trigger but unbounded across the family ---
which is why the uniformity in $\mathbf z$ required by
Definition~\ref{ass:env} and Theorem~\ref{thm:equiv} is the essential
ingredient. The envelope regime excludes this pathology precisely because the
echo is not envelope-preserving: its response can be delayed arbitrarily by the
context, so
$\sup_{\mathbf z}\,\sup_{t}\kappa_t\|\operator(\disturbance,\mathbf z)_t\|
=\sup_{T\ge 0}\kappa_T=\infty$, violating \eqref{eq:env}.
\end{example}
}

We next show that common classes of operators used within the PB framework are
envelope-preserving for polynomial envelopes
$\kappa_t=(t+1)^\alpha$. The corresponding admissible class
$\ell_{\infty,\kappa}^n$ accommodates a broad range of practically relevant
$\ell_p$ disturbances, including finite-support transients and signals with
polynomially decaying tails.

\begin{lemma}[Fading memory preserves envelopes]\label{lem:fading-poly}
Fix $\alpha>0$, $\kappa_t=(t+1)^\alpha$, and let
$\mathcal M\colon\ell^n\to\ell^r$ be causal with the exponential forgetting property: 
{\small\begin{equation}\label{eq:fading}
\|\mathcal M(u)_t\|
\le a\rho^t + b\sum_{k=0}^{t}\rho^{t-k}\|u_k\|,
\qquad a,b\ge0,\;\rho\in(0,1).
\end{equation}}
Then $\mathcal M$ maps $\ell_{\infty,\kappa}^n$ into $\ell_{\infty,\kappa}^r$ and
\begin{equation}\label{eq:fading-poly-bound}
\|\mathcal M(u)\|_{\infty,\kappa}
\le a\sup_{t\ge0}(t{+}1)^\alpha\rho^t
  + b\,C_{\alpha,\rho}\,\|u\|_{\infty,\kappa},
\end{equation}
where $C_{\alpha,\rho}:=\sum_{j=0}^\infty\rho^{j}(j{+}1)^\alpha<\infty$.
\end{lemma}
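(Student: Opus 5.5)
The plan is to multiply the forgetting inequality \eqref{eq:fading} by $\kappa_t=(t+1)^\alpha$ and bound the two resulting terms separately, uniformly in $t$. Fix $u\in\ell_{\infty,\kappa}^n$ and set $M:=\|u\|_{\infty,\kappa}$, so that $\|u_k\|\le M(k+1)^{-\alpha}$ for all $k\ge0$. Then, for every $t\ge0$,
\[
(t+1)^\alpha\|\mathcal M(u)_t\|
\le a(t+1)^\alpha\rho^t + bM\sum_{k=0}^{t}\rho^{t-k}\frac{(t+1)^\alpha}{(k+1)^\alpha}.
\]
The first term is bounded by $a\sup_{t\ge0}(t+1)^\alpha\rho^t$. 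This supremum is finite because $(t+1)^\alpha\rho^t\to0$ (exponential decay beats polynomial growth), for instance by the ratio test or by writing $\rho^t=e^{-t\ln(1/\rho)}$.

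For the convolution term, I substitute $j=t-k$, so that $t+1=(k+1)+j$. The key elementary inequality is
\[
\frac{k+1+j}{k+1}=1+\frac{j}{k+1}\le 1+j\qquad(k\ge0,\ j\ge0),
\]
which yields $(t+1)^\alpha/(k+1)^\alpha\le (j+1)^\alpha$. The convolution sum is then bounded by $\sum_{j=0}^{t}\rho^j(j+1)^\alpha\le C_{\alpha,\rho}$, uniformly in $t$. The constant $C_{\alpha,\rho}$ is finite, for example by the ratio test, since $\rho^{j+1}(j+2)^\alpha/(\rho^j(j+1)^\alpha)\to\rho<1$.

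Taking the supremum over $t$ gives \eqref{eq:fading-poly-bound}. Since the right-hand side is finite, $\mathcal M(u)\in\ell_{\infty,\kappa}^r$. Causality of $\mathcal M$ is assumed in the statement, so nothing further is needed. The only step needing care is the uniform-in-$t$ control of the ratio $(t+1)^\alpha/(k+1)^\alpha$, and it is handled by the inequality $(k+1+j)\le(k+1)(j+1)$. This avoids the more cumbersome alternative of splitting the sum at $k=t/2$.
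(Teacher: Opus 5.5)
Your proof is correct and follows essentially the same route as the paper. Both multiply the forgetting bound by $(t+1)^\alpha$, substitute $j=t-k$, use $(t+1)\le(j+1)(t-j+1)$ to bound the ratio by $(j+1)^\alpha$, and then dominate the finite sum by $C_{\alpha,\rho}$; your ratio-test justifications of the finiteness of $C_{\alpha,\rho}$ and of $\sup_{t}(t+1)^\alpha\rho^t$ are valid.
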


\begin{proof}
For $u\in\ell_{\infty,\kappa}^n$ we have
$\|u_k\|\le\|u\|_{\infty,\kappa}(k+1)^{-\alpha}$, so with $j=t-k$,
\[
(t{+}1)^\alpha\|\mathcal M(u)_t\|
\le a(t{+}1)^\alpha\rho^t
  + b\|u\|_{\infty,\kappa}
    \sum_{j=0}^{t}\rho^j\frac{(t{+}1)^\alpha}{(t{-}j{+}1)^\alpha}.
\]
Since $0\le j\le t$ implies $(j{+}1)(t{-}j{+}1)\ge t{+}1$, the ratio is at most $(j{+}1)^\alpha$.
Since all summands are non-negative, replacing $\sum_{j=0}^{t}$ by
$\sum_{j=0}^{\infty}=C_{\alpha,\rho}$ yields a bound independent of~$t$;
taking $\sup_t$ gives~\eqref{eq:fading-poly-bound}.
The series converges because exponential decay dominates polynomial growth.
\end{proof}
The exponential-forgetting hypothesis of
Lemma~\ref{lem:fading-poly} is satisfied by the stable REN and SSM
parametrizations considered in this work. For RENs, the certified
contraction property yields exponential decay of the internal-state
response and a geometrically weighted bound on the response to past
inputs \cite{revay2023recurrent}. For the considered SSMs, the
underlying finite-dimensional LTI state matrices are Schur stable.
Consequently, for each state matrix $A$, there exist constants
$c>0$ and $\rho\in(0,1)$ such that
\[
\|A^t\|\le c\rho^t,\qquad t\ge0.
\]
Combined with the Lipschitz continuity and zero-at-zero property of
the static nonlinearities, this gives constants $a,b\ge0$ and
$\bar\rho\in(0,1)$ such that
\[
\|\operator_p(\disturbance)_t\|
\le
a\bar\rho^t
+
b\sum_{k=0}^{t}
\bar\rho^{\,t-k}\|\disturbance_k\|.
\]
A finite cascade of such layers preserves this estimate: composing two bounds
of the form \eqref{eq:fading} produces additional terms bounded by
$(t+1)\bar\rho^{\,t}$, which are absorbed into $a'\tilde\rho^{\,t}$ for any
$\tilde\rho\in(\bar\rho,1)$, so the composition again satisfies
\eqref{eq:fading} with adjusted constants and decay rate. Hence the REN and SSM disturbance processors
used in PB map polynomial weighted-envelope spaces into themselves
and satisfy the hypothesis of
Lemma~\ref{lem:fading-poly}.

\section{Numerical Experiment: Continuous Moving-Gate Navigation}
  \label{sec:exp-gate}
Here, we evaluate the proposed context-aware PB architecture against a
context-agnostic PB baseline and the competing context-aware architectures MAD
and rPB. We consider a challenging navigation task in which a planar robot must
reach the origin while crossing a wall through a continuously moving gate. The
controller must use causal gate observations and their recent history to decide
when and where to cross, while the corrective operator retains the structural
$\mathcal L_p$ constraint imposed by the proposed factorization.

\subsection{The setup}
Consider a planar robot described by
\begin{equation}
x_t=
\begin{bmatrix}
p_t^\top & v_t^\top
\end{bmatrix}^\top
\in\mathbb{R}^4,
\qquad
u_t\in\mathbb{R}^2,
\end{equation}
where $p_t,v_t\in\mathbb{R}^2$ are the robot position and velocity, and $u_t$ is the PB corrective input. We denote by $p_{1,t}$ and $p_{2,t}$ the longitudinal and lateral components of $p_t$, respectively. The nominal model is a pre-stabilized double integrator, augmented with dissipative quadratic drag,
\begin{align}
p_{t+1} &= p_t + \Delta t\, v_t, \label{eq:gate-nom-p}\\
v_{t+1} &= v_t + \Delta t\!\left(
-K_p p_t - K_d v_t - c_{\mathrm{drag}}\|v_t\|_2v_t + u_t
\right),
\label{eq:gate-nom-v}
\end{align}
with $\Delta t=0.05$\,s, $K_p=0.32$, $K_d=0.80$, and $c_{\mathrm{drag}}=1.0$. We denote by $f_{\mathrm{nom}}$ the nominal transition map associated with \eqref{eq:gate-nom-p}--\eqref{eq:gate-nom-v}. Notice that the robot is pre-stabilized around the target position at the origin, $x_{\mathrm{goal}}=0$, in the absence of any PB correction.

The true closed-loop dynamics are driven by an additive process disturbance,
\begin{equation}
x_{t+1}=f_{\mathrm{nom}}(x_t,u_t)+w_{t+1},
\qquad t\ge 0.
\label{eq:gate-rollout}
\end{equation}
The initial state is encoded by $x_0=w_0$: the robot is initialized at rest, with
\begin{align}
p_{1,0}\sim\mathcal{U}[0.6,2.1] \: \: \mbox{ and } \: \:
p_{2,0}\sim\mathcal{U}[-1.5,1.5]. \label{eq:init}\end{align}
For $t\ge 1$, the disturbance combines small background noise with transient velocity gusts,
\begin{equation}
w_t=\chi_t\!\left(\eta_t+\sum_{k=1}^{K}b_k\,\mathbf{1}_{[t_k,t_k+\delta_k)}(t)\right),
\label{eq:gate-noise}
\end{equation}
where $\eta_t$ is i.i.d.\ Gaussian with standard deviations
$\sigma_{\mathrm{pos}}=3\times10^{-4}$ on position channels and
$\sigma_{\mathrm{vel}}=1.2\times10^{-3}$ on velocity channels. Each burst affects only the velocity channels, with $K\sim\mathcal{U}\{2,3,4\}$ and
$\delta_k\sim\mathcal{U}\{4,\ldots,10\}$. The longitudinal burst amplitude is sampled from $[-0.004,0.004]$, while the lateral gust magnitude is sampled from $[0.01,0.028]$ with random sign. The factor $\chi_t\in[0,1]$ is equal to one over the early part of the episode and smoothly decays to zero at the horizon, so the disturbance sequence can be extended by zero for $t\ge T$, with $T=160$. Hence the experiment remains within the admissible disturbance class of Theorem~\ref{thm:equiv}.

Following the PB formulation, the disturbance is reconstructed online from one-step nominal prediction errors,
\begin{equation}
\widehat w_0=x_0,
\qquad
\widehat w_t=x_t-f_{\mathrm{nom}}(x_{t-1},u_{t-1}),
\quad t\ge 1.
\label{eq:gate-wt}
\end{equation}
Under \eqref{eq:gate-rollout}, this reconstruction is exact, so $\widehat w_t=w_t$ at every step.

The robot moves inside a corridor of half-width $y_s=1.6$. A vertical wall is placed at longitudinal position $p_{\mathrm{wall}}=0.55$ and contains a moving gate of half-width $h=0.20$. The gate centre $g_t\in[-A,A]$, with $A=0.95$, evolves as a clipped Ornstein--Uhlenbeck process over the whole episode:
\begin{align}
\mu &\sim \mathcal{U}[-\rho_g A,\rho_g A], \notag\\
g_{t+1}
&=
\Pi_{[-A,A]}\!\left(
\mu+(1-\theta)(g_t-\mu)+\sigma_g\xi_{t+1}
\right),
\quad
\xi_t\sim\mathcal{N}(0,1),
\label{eq:continuous-gate}
\end{align}
where $\Pi_{[-A,A]}(r)=\min\{A,\max\{-A,r\}\}$ clips the gate within the corridor, $\theta=1-\exp(-1/\tau_g)$ and
$\sigma_g=r_g\sqrt{2\theta-\theta^2}$. In the reported runs we use
$\tau_g=60$, $r_g=0.50$, and $\rho_g=0.55$. The gate therefore has a smooth but persistent motion around an episode-dependent centre $\mu$ and continues to evolve while the robot approaches the wall. The overall setup is illustrated in Fig.~\ref{fig:gate-setup-continuous}.

\begin{figure}[t]
  \centering
\begin{tikzpicture}[
  x=1.00cm,
  y=1.00cm,
  >=stealth,
  line cap=round,
  line join=round,
  font=\scriptsize
]
  \def\xmin{0.20}
  \def\xmax{7.45}
  \def\ytop{1.65}
  \def\ybot{-1.65}
  \def\xwall{3.50}
  \def\xrobot{6.83}
  \def\yrobot{-0.78}
  \def\xtarget{0.85}
  \def\ytarget{0.00}
  \def\xinitmin{4.82}
  \def\xinitmax{7.27}
  \def\yinitmin{-1.45}
  \def\yinitmax{1.45}
  \def\hgate{0.32}
  \def\gA{-0.92}
  \def\gB{-0.28}
  \def\gC{-0.78}
  \def\gD{0.02}
  \def\gE{-0.42}
  \def\gF{0.12}
  \def\gnow{0.26}

  \fill[gray!3] (\xmin,\ybot) rectangle (\xmax,\ytop);
  \filldraw[
    fill=blue!12,
    fill opacity=0.30,
    draw=blue!45,
    draw opacity=0.48,
    densely dashed,
    rounded corners=2pt,
    line width=0.45pt
  ] (\xinitmin,\yinitmin) rectangle (\xinitmax,\yinitmax);

  \draw[line width=1.05pt]
    (\xmin,\ytop) -- (\xmax,\ytop);
  \draw[line width=1.05pt]
    (\xmin,\ybot) -- (\xmax,\ybot);
  \draw[gray!26, thin, dashed]
    (\xmin,0) -- (\xmax,0);
  \draw[<->]
    (\xmin-0.22,\ybot) -- (\xmin-0.22,\ytop)
    node[midway,left] {$2y_s$};

  \draw[line width=4.6pt, gray!68]
    (\xwall,\ytop) -- (\xwall,\gnow+\hgate);
  \draw[line width=4.6pt, gray!68]
    (\xwall,\gnow-\hgate) -- (\xwall,\ybot);

  \fill[green!45!black, opacity=0.045, rounded corners=1pt]
    (\xwall+1.92,\gA-\hgate)
    rectangle
    (\xwall+2.12,\gA+\hgate);
  \fill[green!45!black, opacity=0.060, rounded corners=1pt]
    (\xwall+1.58,\gB-\hgate)
    rectangle
    (\xwall+1.78,\gB+\hgate);
  \fill[green!45!black, opacity=0.080, rounded corners=1pt]
    (\xwall+1.24,\gC-\hgate)
    rectangle
    (\xwall+1.44,\gC+\hgate);
  \fill[green!45!black, opacity=0.105, rounded corners=1pt]
    (\xwall+0.90,\gD-\hgate)
    rectangle
    (\xwall+1.10,\gD+\hgate);
  \fill[green!45!black, opacity=0.135, rounded corners=1pt]
    (\xwall+0.56,\gE-\hgate)
    rectangle
    (\xwall+0.76,\gE+\hgate);
  \fill[green!45!black, opacity=0.170, rounded corners=1pt]
    (\xwall+0.22,\gF-\hgate)
    rectangle
    (\xwall+0.42,\gF+\hgate);

  \draw[
    green!45!black,
    line width=0.45pt,
    opacity=0.18,
    rounded corners=1pt
  ]
    (\xwall+1.92,\gA-\hgate)
    rectangle
    (\xwall+2.12,\gA+\hgate);
  \draw[
    green!45!black,
    line width=0.45pt,
    opacity=0.23,
    rounded corners=1pt
  ]
    (\xwall+1.58,\gB-\hgate)
    rectangle
    (\xwall+1.78,\gB+\hgate);
  \draw[
    green!45!black,
    line width=0.45pt,
    opacity=0.28,
    rounded corners=1pt
  ]
    (\xwall+1.24,\gC-\hgate)
    rectangle
    (\xwall+1.44,\gC+\hgate);
  \draw[
    green!45!black,
    line width=0.45pt,
    opacity=0.34,
    rounded corners=1pt
  ]
    (\xwall+0.90,\gD-\hgate)
    rectangle
    (\xwall+1.10,\gD+\hgate);
  \draw[
    green!45!black,
    line width=0.45pt,
    opacity=0.40,
    rounded corners=1pt
  ]
    (\xwall+0.56,\gE-\hgate)
    rectangle
    (\xwall+0.76,\gE+\hgate);
  \draw[
    green!45!black,
    line width=0.45pt,
    opacity=0.46,
    rounded corners=1pt
  ]
    (\xwall+0.22,\gF-\hgate)
    rectangle
    (\xwall+0.42,\gF+\hgate);

  \fill[green!45!black, opacity=0.30]
    (\xwall-0.16,\gnow-\hgate)
    rectangle
    (\xwall+0.16,\gnow+\hgate);
  \draw[green!35!black, line width=0.65pt]
    (\xwall-0.19,\gnow+\hgate)
    --
    (\xwall+0.19,\gnow+\hgate);
  \draw[green!35!black, line width=0.65pt]
    (\xwall-0.19,\gnow-\hgate)
    --
    (\xwall+0.19,\gnow-\hgate);

  \draw[
    green!45!black,
    line width=0.75pt,
    opacity=0.42,
    ->
  ]
    (\xwall+2.02,\gA)
    .. controls
      (\xwall+1.92,-0.62) and
      (\xwall+1.82,-0.38) ..
    (\xwall+1.68,\gB)
    .. controls
      (\xwall+1.58,-0.52) and
      (\xwall+1.48,-0.74) ..
    (\xwall+1.34,\gC)
    .. controls
      (\xwall+1.24,-0.48) and
      (\xwall+1.14,-0.10) ..
    (\xwall+1.00,\gD)
    .. controls
      (\xwall+0.90,-0.12) and
      (\xwall+0.80,-0.36) ..
    (\xwall+0.66,\gE)
    .. controls
      (\xwall+0.56,-0.24) and
      (\xwall+0.46,0.06) ..
    (\xwall+0.32,\gF)
    .. controls
      (\xwall+0.20,0.15) and
      (\xwall+0.12,0.22) ..
    (\xwall,\gnow);

  \fill[green!45!black, opacity=0.16]
    (\xwall+2.02,\gA) circle (1.2pt);
  \fill[green!45!black, opacity=0.24]
    (\xwall+1.68,\gB) circle (1.2pt);
  \fill[green!45!black, opacity=0.32]
    (\xwall+1.34,\gC) circle (1.2pt);
  \fill[green!45!black, opacity=0.42]
    (\xwall+1.00,\gD) circle (1.2pt);
  \fill[green!45!black, opacity=0.52]
    (\xwall+0.66,\gE) circle (1.2pt);
  \fill[green!45!black, opacity=0.62]
    (\xwall+0.32,\gF) circle (1.2pt);
  \fill[green!35!black]
    (\xwall,\gnow) circle (1.7pt);

  \draw[green!45!black, ->, opacity=0.50]
    (\xwall+2.08,1.08) -- (\xwall+0.18,1.08);
  \node[
    green!35!black,
    anchor=south
  ] at (\xwall+1.12,1.10)
    {gate history};
  \node[
    green!35!black,
    anchor=east
  ] at (\xwall-0.22,\gnow+0.52)
    {$g_t$};

  \draw[green!35!black, thin]
    (\xwall-0.20,\gnow+\hgate)
    --
    (\xwall-1.02,\gnow+\hgate);
  \draw[green!35!black, thin]
    (\xwall-0.20,\gnow-\hgate)
    --
    (\xwall-1.02,\gnow-\hgate);
  \draw[<->, green!35!black]
    (\xwall-1.02,\gnow-\hgate)
    --
    (\xwall-1.02,\gnow+\hgate)
    node[midway,left,black] {$2h$};

  \fill[blue!70]
    (\xrobot,\yrobot) circle (2.7pt);
  \node[anchor=north]
    at (\xrobot,\yrobot-0.12)
    {robot};
  \node[
    blue!48!black,
    anchor=north east,
    align=right,
    fill=white,
    fill opacity=0.78,
    text opacity=1,
    inner sep=1.4pt
  ] at (\xinitmax-0.08,\yinitmax-0.40)
    {initial states $\mathcal X_0$};

  \draw[densely dotted, line width=0.95pt]
    (\xrobot,\yrobot)
    .. controls
      (6.20,-1.02) and
      (4.90,0.52) ..
    (\xwall,\gnow)
    .. controls
      (2.80,-0.46) and
      (1.83,-0.32) ..
    (\xtarget,\ytarget);

  \fill[red!78]
    (\xtarget,\ytarget) circle (2.5pt);
  \draw[red!78, line width=0.7pt]
    (\xtarget-0.08,\ytarget-0.08)
    --
    (\xtarget+0.08,\ytarget+0.08)
    (\xtarget-0.08,\ytarget+0.08)
    --
    (\xtarget+0.08,\ytarget-0.08);
  \node[anchor=south]
    at (\xtarget,\ytarget+0.13)
    {target};

  \draw[thin]
    (\xtarget,\ybot-0.07)
    --
    (\xtarget,\ybot-0.20);
  \node[anchor=north]
    at (\xtarget,\ybot-0.23)
    {target: $p=(0,0)$};

  \draw[thin]
    (\xwall,\ybot-0.07)
    --
    (\xwall,\ybot-0.20);
  \node[anchor=north]
    at (\xwall,\ybot-0.23)
    {$p_{\mathrm{wall}}=0.55$};
\end{tikzpicture}
  \caption{Continuous moving-gate navigation setup. The wall opening drifts throughout the episode, so the controller must cross the wall using only causal information about the current gate motion.}
  \label{fig:gate-setup-continuous}
\end{figure}

The goal of the PB corrective input is to steer the robot to the origin while avoiding collisions with both the corridor boundaries and the moving wall. To do that, for each sample trajectory, we optimize the following objective:
{\small
\begin{align}
\mathcal{J}
&=
\underbrace{\lambda_T \|p_T\|_2^2
+ \lambda_v \|v_T\|_2^2
+ \frac{\lambda_S}{T}\sum_{t=0}^{T-1}\ell_H(p_t)}_{\text{terminal and stage target cost}}
\notag\\
&\quad+
\underbrace{\sum_{t=0}^{T-1} \beta_t
\Bigl[
\lambda_{\mathrm{tr}}(p_{2,t}-g_t)^2
+
\lambda_{\mathrm{coll}}\,
\phi_{d}\bigl(|p_{2,t}-g_t|-h_{\mathrm{train}}\bigr)
\Bigr]}_{\text{moving-gate tracking and collision avoidance}}
\notag\\
&\quad+
\underbrace{\frac{\lambda_{\mathrm{ctrl}}}{T}\sum_{t=0}^{T-1}\|u_t\|_2^2
+ \frac{\lambda_{\mathrm{corr}}}{T}\sum_{t=0}^{T-1}
\phi_{d'}\bigl(|p_{2,t}|-y_s\bigr)}_{\text{control energy and corridor collision avoidance}},
\label{eq:gate-loss}
\end{align}}
where $\ell_H$ is the Huber loss with $\delta=0.5$,
$\phi_d(z)=d^{-1}\log(1+e^{d z})$ is a scaled softplus with sharpness $d$, and
\[
\beta_t=
\frac{\exp\!\bigl(-\tfrac12(p_{1,t}-p_{\mathrm{wall}})^2/\sigma_{\mathrm{wall}}^2\bigr)}
{\sum_{j=0}^{T-1}\exp\!\bigl(-\tfrac12(p_{1,j}-p_{\mathrm{wall}})^2/\sigma_{\mathrm{wall}}^2\bigr)}
\]
are time-normalized Gaussian weights centered at the wall, with
$\sigma_{\mathrm{wall}}=0.14$. For continuous-gate training, we use a slightly conservative effective half-width $h_{\mathrm{train}}=h-\Delta h$ with $\Delta h=0.04$, while all reported collision metrics use the true half-width $h$. All remaining cost weights and sharpness parameters are listed in the accompanying code. It is worth noting that the task loss is highly nonlinear and non-convex, and it is coupled with the nonlinear dynamics \eqref{eq:gate-nom-p}--\eqref{eq:gate-nom-v}, making the resulting optimal control problem \eqref{eq:opt_pb} very challenging to solve. This difficulty is compounded by the requirement that the optimized controller preserve closed-loop stability. As anticipated earlier, the PB framework addresses this issue structurally: stability is enforced by the controller architecture rather than by the task objective, allowing the stable operator to be optimized for the navigation loss without compromising asymptotic stability.

\subsection{Context, PB controller design and training procedure}
\label{subsec:gate-context}

At each time step, the context-aware controller receives a causal context signal $z_t$ that summarizes the available information about the moving gate and the robot's current geometric relation to it. We stress that this representation of the contextual signal is not fixed by the framework: the specific choice of $z_t$ is a design degree of freedom, provided that its components remain bounded. In our experiments, we use the following contextual signal:
\begin{equation}
\small
\begin{split}
z_t =
\Big[&
g_t/y_s,\;
\Delta g_t/y_s,\;
\bar g_t/y_s,\;
(p_{2,t}-g_t)/y_s,\;
(p_{1,t}-p_{\mathrm{wall}})/x_s,\\
&
-p_{1,t}/x_s,\;
-p_{2,t}/y_s,\;
v_{1,t},\;
v_{2,t}
\Big]^{\!\top}.
\end{split}
\label{eq:gate-context}
\end{equation}
Here, $\Delta g_t=g_t-g_{t-1}$ is the causally observed gate velocity, while $\bar g_t$ is an exponential moving average of the gate centre,
\begin{align} \label{eq:gt}
    \bar g_t=\alpha_g g_t+(1-\alpha_g)\bar g_{t-1},
\qquad
\bar g_0=g_0,
\end{align}
with $\alpha_g=0.35$.
The constants $y_s$ (the corridor half-width) and $x_s=2.1$ (the largest
initial longitudinal distance) act as fixed normalization scales for the
lateral and longitudinal features, respectively.

Let us break down the information contained in $z_t$. The first three features describe the gate itself: its current position, instantaneous motion, and recent average position. The term $(p_{2,t}-g_t)/y_s$ gives the signed lateral error between the robot and the gate opening. The feature $(p_{1,t}-p_{\mathrm{wall}})/x_s$ indicates how close the robot is to the wall, while $-p_{1,t}/x_s$ and $-p_{2,t}/y_s$ encode the normalized displacement from the goal at the origin. Finally, $v_{1,t}$ and $v_{2,t}$ provide the current velocity, allowing the controller to adapt its correction to the robot's dynamical state.
The PB correction is computed using the factorized context-enriched operator introduced in \eqref{eq:op-decomp},
\[
\operator(\disturbance,\mathbf z)
=
\operator_\infty(\disturbance,\mathbf z)
\boxtimes
\operator_p(\disturbance).
\]
For this experiment, \(n=4\), \(q=9\), and \(m=2\). We select a factorization dimension \(s=16\), so that
\[
\operator_p:\ell_p^4\rightarrow\ell_p^{16},
\qquad
\operator_\infty:\ell_p^4\times\ell^9
\rightarrow\ell_\infty^{2\times16}.
\]

The disturbance-processing operator $\operator_p$ is implemented as a deep
stable structured state-space model (SSM) following the parametrization of
\cite{11312999}. We use $N_p=8$ SSM layers, hidden dimension $d_p=20$, GLU
feedforward nonlinearities, and output dimension $s=16$. The bounded mixer
$\operator_\infty$ is implemented as an MLP applied pointwise to
$(\widehat w_t,z_t)$, with depth $N_\infty=4$, hidden width $d_\infty=64$, and
output dimension $ms=32$, reshaped into a $2\times16$ matrix. Its linear
layers are spectrally normalized, and its output is passed through an
elementwise scaled softsign map satisfying
\[
\left|
\operator_\infty(\disturbance,\mathbf z)_t^{i,j}
\right|
\leq B_\infty,
\qquad
B_\infty=8.
\]
Consequently, under the Euclidean induced norm,
\[
\left\|
\operator_\infty(\disturbance,\mathbf z)_t
\right\|_2
\leq
\left\|
\operator_\infty(\disturbance,\mathbf z)_t
\right\|_{\mathrm F}
\leq
B_\infty\sqrt{ms}
=
8\sqrt{32}.
\]
Thus $\operator_\infty(\disturbance,\mathbf z)$ is uniformly bounded, while
$\operator_p(\disturbance)\in\ell_p^{16}$, as required by
Theorem~\ref{thm:suff}.

For the context-agnostic PB baseline, we retain the same factorized architecture but set \(z_t=0\) at every time step. The mixer may therefore depend on the reconstructed disturbance, but it cannot condition the corrective action on the realized gate trajectory. We also compare against parameter-matched implementations of MAD \cite{furieri2025mad} and rPB \cite{11312988}, obtained as special cases of \eqref{eq:op-decomp}. As mentioned in section \ref{sec:main}, MAD uses a scalar disturbance component, corresponding to \(s=1\), whereas rPB restricts \(\operator_\infty(\disturbance,\mathbf z)_t\) to be diagonal, with \(s=m=2\). Both baselines receive the complete context signal in \eqref{eq:gate-context}; their SSM hyperparameters are so that the overall numbers of trainable parameters match that of the proposed architecture as closely as possible. Finally, the context-ablation variants are obtained by providing \(\operator_\infty\) with selected subsets of the features in \eqref{eq:gate-context}.

We train the PB controllers by solving \eqref{eq:opt_pb} as described in Sec.~\ref{sec:pb}. At each epoch, we generate a fresh batch of $4096$ trajectories of length $T=160$, with independently sampled initial conditions, continuous gate trajectories, and disturbance sequences according to \eqref{eq:init}, \eqref{eq:gate-noise} and \eqref{eq:continuous-gate} respectively. The samples are generated in paired form: each base scenario is accompanied by
a gate-reflected companion, obtained using the same initial condition and
disturbance sequence while replacing $g_t$ by $-g_t$. The resulting batch is
then shuffled. Thus, each training epoch contains $2048$ independent base
scenarios and $2048$ corresponding gate-reflected companions. We train controllers for $800$ epochs. Model selection is performed on a fixed validation batch of $4096$ held-out trajectories, and final performance is reported on an independent fixed test batch of $N=4096$ trajectories. For evaluation, we use held-out trajectories sampled from the same continuous-gate distribution and report:
\begin{itemize}
\item \textbf{Success rate}: fraction of trajectories that both cross the
moving gate without collision and terminate within
$\varepsilon_{\mathrm{goal}}=0.18$ of the origin.

\item \textbf{Crash rate}: fraction of trajectories that collide with the
wall instead of passing through the moving gate.

\item \textbf{Goal rate}: fraction of trajectories whose terminal position lies within $\varepsilon_{\mathrm{goal}}=0.18$ of the goal,
\[
\left|p_T-p_{\mathrm{goal}}\right|*2
\leq \varepsilon*{\mathrm{goal}},
\]
irrespective of whether the gate-crossing condition is satisfied.

\item \textbf{Average crossing error}: empirical mean of
$\lvert p_{2,t^\star}-g_{t^\star}\rvert$, where $t^\star$ is obtained by
interpolating the trajectory at $p_1=p_{\mathrm{wall}}$.

\item \textbf{Average control energy}: mean quadratic control effort per
time step and trajectory,
\[
E_u=
\frac{1}{NT}
\sum_{i=1}^{N}\sum_{t=0}^{T-1}
\left\|u_t^{(i)}\right\|_2^2.
\]
This metric measures how much control action is required, independently
of the task-performance penalties.

\item \textbf{Average cost}: empirical mean of the complete task objective
evaluated over the test trajectories,
\[
\overline{J}
=
\frac{1}{N}\sum_{i=1}^{N}J^{(i)},
\]
where $J$ is defined in~\eqref{eq:gate-loss}.
\end{itemize}

\subsection{Results}

Here, we present the results obtained with the proposed contextual factorization through two sets of experiments. First, we compare C.~Factorization with the context-agnostic PB controller, MAD, and rPB. Second, we perform a context ablation to assess the contribution of the different contextual features.
\label{subsec:gate-results}

\paragraph{Architecture comparison}

\begin{table*}[t]
  \centering
 \caption{Continuous moving-gate results. Panel~(a) compares the controller
architectures, while Panel~(b) studies the context features supplied to
C.~Factorization. Each panel reports performance over $4096$ test episodes.
Success requires both collision-free gate passage and terminal goal attainment.
Arrows indicate the preferred direction, and the best values are highlighted
in bold within each panel.}
  \label{tab:continuous-gate-comparisons}

  \small
  \setlength{\tabcolsep}{5pt}
  \renewcommand{\arraystretch}{1.18}

  \resizebox{\textwidth}{!}{%
  \begin{tabular}{lcccccc}
    \toprule
    Configuration
    & Success (\%) $\uparrow$
    & Crash (\%) $\downarrow$
    & Goal (\%) $\uparrow$
    & Cross.\ error $\downarrow$
    & Control energy $\downarrow$
    & Cost $\downarrow$ \\
    \midrule

    \rowcolor{sectiongray}
    \multicolumn{7}{l}{\textbf{(a) Architecture comparison}} \\

    \rowcolor{tablegray}
    Context-agnostic
    & 28.20
    & 71.80
    & \textbf{100.00}
    & 0.4074
    & \textbf{1.090}
    & 54.278 \\

    \rowcolor{white}
    MAD
    & 78.64
    & 21.26
    & 99.80
    & 0.1292
    & 1.482
    & 15.101 \\

    \rowcolor{tablegray}
    rPB
    & 74.49
    & 25.51
    & \textbf{100.00}
    & 0.1392
    & 1.717
    & 14.933 \\

    \rowcolor{white}
    C.~Factorization
    & \textbf{83.45}
    & \textbf{16.53}
    & 99.98
    & \textbf{0.1168}
    & 1.493
    & \textbf{12.230} \\

    \midrule

    \rowcolor{sectiongray}
    \multicolumn{7}{l}{\textbf{(b) Context-feature comparison for C.~Factorization}} \\

    \rowcolor{white}
     $z_t^{(0)}$ (no gate info)
& $28.93$ & $70.07$ & $99.98$ & $0.4096$  & \textbf{1.210} & $51.036$ \\

    \rowcolor{tablegray}
    $z_t^{(1)}$  (minimal gate info)
    & 79.39
    & 20.58
    & {99.97}
    & 0.1265
    & {1.449}
    & 14.257 \\

    \rowcolor{white}
     $z_t^{(2)}$  (intermediate gate info)
    & 80.81
    & 19.17
    & 99.97
    & 0.1238
    & 1.504
    & 13.226 \\

    \rowcolor{tablegray}
     $z_t^{(3)}$  (full gate info)
    & \textbf{83.45}
    & \textbf{16.53}
    & \textbf{99.98}
    & \textbf{0.1168}
    & 1.493
    & \textbf{12.230} \\
    \bottomrule
  \end{tabular}%
  }
\end{table*}

Panel~(a) of Table~\ref{tab:continuous-gate-comparisons} compares
C.~Factorization with the context-agnostic PB controller, MAD, and rPB, while
Panel~(b) reports the context-feature ablation. Not surprisingly, goal success is approximately $100\%$ for every
architecture, indicating that the main source of failure is not convergence
to the origin after crossing, but collision with the moving wall. The
context-agnostic PB controller succeeds in only $28.20\%$ of the test episodes
and has an average crossing error of $0.4074$. Although it can react to
reconstructed disturbances, it cannot condition its correction on the
realized gate motion and must therefore implement a context-free compromise
across all possible gate trajectories.

MAD and rPB substantially improve the success rate to $78.64\%$ and
$74.49\%$, respectively, confirming the importance of incorporating contextual
information. In this experiment, MAD performs better than rPB despite its
simpler factorization. C.~Factorization achieves the best overall result, with
an $83.45\%$ success rate, a $16.53\%$ crash rate, and the smallest average
crossing error, $0.1168$. It also attains the lowest average cost, $12.230$. Its average control energy,
$1.493$, is close to that of MAD, $1.482$, indicating that the performance
improvement is not obtained simply by applying substantially larger control
inputs.

Figure~\ref{fig:gate-summary} provides a qualitative explanation for the
aggregate differences in Table~\ref{tab:continuous-gate-comparisons}. We picked a particularly challenging test realization where
the opening changes rapidly when the robot is already approaching the
wall. The context-agnostic controller cannot revise its action according to
this motion, while the corrections produced by MAD and rPB remain
insufficient to align their trajectories with the opening at their respective
wall events. C.~Factorization instead is able to react more effectively to the erratic gate motion, inducing a complex maneuver that allows to cross safely before
returning to the origin. More explanatory animated GIFs, as well as other tests, are available at \url{https://github.com/DecodEPFL/Performance_Boosting}.

\begin{figure*}[t]
  \centering
  \includegraphics[width=\textwidth]{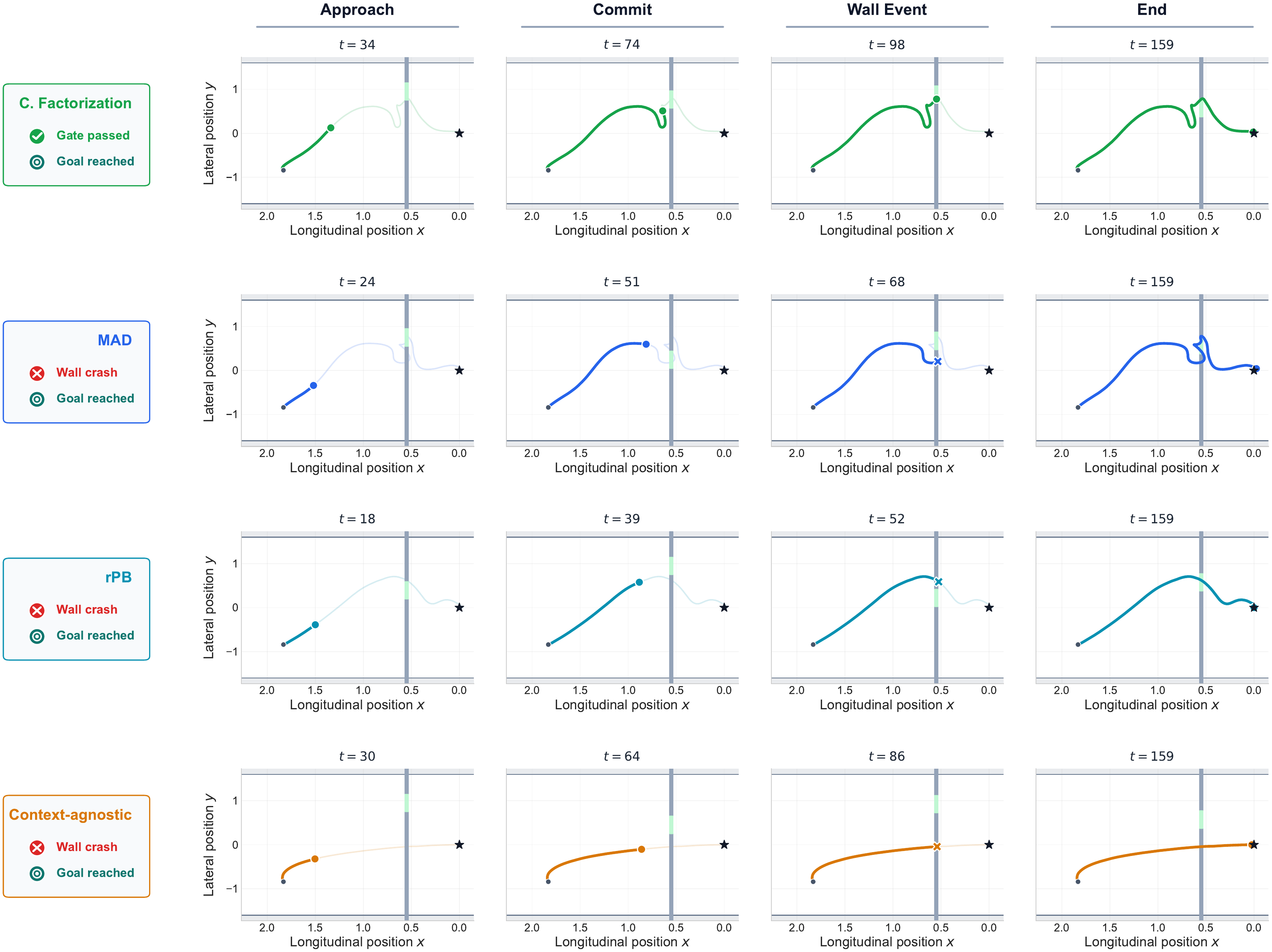}
\caption{Architecture comparison on a particularly challenging continuous-gate realization. All controllers share the same initial condition and gate trajectory. The gate undergoes several rapid displacements during the final approach, including changes immediately before or at the wall-crossing times of rPB ($t=52$), MAD ($t=68$), and the context-agnostic controller ($t=86$). C.~Factorization adapts to this late motion and crosses safely at $t=98$, whereas the other three controllers collide with the wall. Columns show different snapshots, namely, the approach, commitment, controller-specific wall event (crossing or crashing), and final state at the end of the control horizon; bold segments indicate the trajectory completed at the displayed time, while faded segments show its continuation.}
  \label{fig:gate-summary}
\end{figure*}

\paragraph{Context ablation}
The context vector $\mathbf z_t$ in~\eqref{eq:gate-context} combines the
instantaneous gate geometry, recent gate motion, relative goal position, and
agent velocity. To determine which parts of this information are responsible
for the improvement of C.~Factorization, we first introduce a gate-blind
context and then consider three progressively richer gate-aware context sets.
In particular, we define
\begin{equation}
\small
\begin{aligned}
z_t^{(0)}
&=
\Big[
(p_{1,t}-p_{\mathrm{wall}})/x_s,\quad
(p_{1,\mathrm{goal}}-p_{1,t})/x_s,
\\[-0.5mm]
&\qquad
(p_{2,\mathrm{goal}}-p_{2,t})/y_s,\quad
v_{1,t},\quad
v_{2,t}
\Big]^\top,
\\[1mm]
z_t^{(1)}
&=
\Big[
g_t/y_s,\quad
(p_{2,t}-g_t)/y_s,\quad
(p_{1,t}-p_{\mathrm{wall}})/x_s
\Big]^\top,
\\[1mm]
z_t^{(2)}
&=
\Big[
(z_t^{(1)})^\top,\quad
\bar g_t/y_s,\quad
v_{1,t},\quad
v_{2,t}
\Big]^\top,
\\[1mm]
z_t^{(3)}
&=
z_t
\quad\text{as defined in~\eqref{eq:gate-context}.}
\end{aligned}
\label{eq:ctx_abl}
\end{equation}

The gate-blind context $z_t^{(0)}$ contains the agent position relative to
the wall and the target, together with its velocity, but excludes all
information about the position or motion of the gate. It therefore provides
the controller with sufficient information to approach the goal, but not to
identify where or when the moving opening should be crossed. The minimal
gate-aware context $z_t^{(1)}$ contains only the information required to
describe the instantaneous wall geometry: the observed gate position, the
lateral error relative to the opening, and the remaining longitudinal distance
to the wall. The intermediate context $z_t^{(2)}$ additionally provides a
filtered estimate of the gate position and the current agent velocities,
thereby introducing short-term temporal information and the direction of
motion. Finally, $z_t^{(3)}$ is the complete reference context given
by~\eqref{eq:gate-context}.

Panel~(b) of Table~\ref{tab:continuous-gate-comparisons} shows, as expected,
that information about the moving gate is decisive. In the absence of any
gate-dependent information, the success rate is only $28.93\%$, while the
collision rate reaches $70.07\%$, basically matching the poor performance of the context-agnostic PB controller. Providing the minimal gate-aware context
$z_t^{(1)}$ increases the success rate to $79.39\%$, an improvement of
$52.47$ percentage points, and reduces the collision rate to $20.58\%$.
The gate-crossing error similarly decreases from $0.4096$ to $0.1265$, while
the total cost falls from $51.036$ to $14.257$. In contrast, the gate-blind
controller still reaches
the goal in all test rollouts. Not surprisingly, this confirms that its poor success rate is primarily caused by its inability to locate the moving opening, rather than by an inability to drive the agent toward the origin.

Performance improves further as the gate-aware context becomes richer. The
success rate increases from $79.39\%$ with $z_t^{(1)}$ to $80.81\%$ with
$z_t^{(2)}$, and reaches $83.45\%$ with the complete context $z_t^{(3)}$.
Correspondingly, the crash rate decreases from $20.58\%$ to $19.17\%$ and
finally to $16.53\%$. The complete context also yields the smallest crossing
error, $0.1168$ and the lowest
average cost, $12.230$. Notably, C.~Factorization outperforms both MAD and rPB
even when provided only with the minimal gate-aware context $z_t^{(1)}$,
suggesting an architectural advantage beyond the choice of context features.
The improvement from $z_t^{(2)}$ to $z_t^{(3)}$ indicates that the additional
full-context features, particularly the causally observed gate velocity and
the relative target position, provide useful information for adapting the
control action. This improvement requires only a modest increase in control
energy, from $1.449$ with $z_t^{(1)}$ to $1.493$ with $z_t^{(3)}$.

\section{Conclusions}
We introduced a structured factorization for context-enriched Performance
Boosting operators, combining a disturbance-driven $\mathcal L_p$-stable
dynamical module with a uniformly bounded context-dependent mixer. The
resulting architecture provides a modular and interpretable mechanism for
injecting general contextual information into multi-input PB controllers while
retaining the nominal PB stability guarantees. We proved that the factorization
is always sufficient for $\mathcal L_p$-stability and that, on a
weighted-envelope disturbance domain, it is exact for causal operators
satisfying a context-uniform envelope-preservation property. We also showed
that a broad family of fading-memory operators satisfies this regularity
condition. The continuous moving-gate experiment demonstrated that contextual
information can substantially improve collision avoidance and overall control
performance relative to context-agnostic PB, MAD, and rPB. Future work will
investigate richer contextual architectures, less restrictive admissible
domains, and applications to other learning-based control problems.

\bibliographystyle{ieeetr}
\bibliography{bib}

@inproceedings{furieri2025mad,
  title={ {MAD}: A magnitude and direction policy parametrization for stability constrained reinforcement learning},
  author={Furieri, Luca and Shenoy, Sucheth and Saccani, Danilo and Martin, Andrea and Ferrari-Trecate, Giancarlo},
  booktitle={2025 IEEE 64th Conference on Decision and Control (CDC)},
  pages={942--947},
  year={2025},
}

@ARTICLE{10680398,
  author={Stefanini, Elisa and Palmieri, Luigi and Rudenko, Andrey and Hielscher, Till and Linder, Timm and Pallottino, Lucia},
  journal={IEEE Robotics and Automation Letters}, 
  title={Efficient Context-Aware Model Predictive Control for Human-Aware Navigation}, 
  year={2024},
  volume={9},
  number={11},
  pages={9494-9501},
  doi={10.1109/LRA.2024.3461552}}

@INPROCEEDINGS{10920056,
  author={Coppola, Angelo and Di Pace, Roberta and Storani, Facundo and de Luca, Stefano and Santini, Stefania},
  booktitle={2024 IEEE 27th International Conference on Intelligent Transportation Systems (ITSC)}, 
  title={Context-aware Nonlinear {MPC} for Automated Vehicles Embedding Newell's Car-Following Model}, 
  year={2024},
  volume={},
  number={},
  pages={889-894},
  doi={10.1109/ITSC58415.2024.10920056}}

@article{huang2020learning,
  title={Learning-based switched reliable control of cyber-physical systems with intermittent communication faults},
  author={Huang, Xin and Dong, Jiuxiang},
  journal={IEEE/CAA Journal of Automatica Sinica},
  volume={7},
  number={3},
  pages={711--724},
  year={2020},
  publisher={IEEE}
}

@inproceedings{goel2023semantically,
  title={Semantically informed {MPC} for context-aware robot exploration},
  author={Goel, Yash and Vaskevicius, Narunas and Palmieri, Luigi and Chebrolu, Nived and Arras, Kai O and Stachniss, Cyrill},
  booktitle={2023 IEEE/RSJ International Conference on Intelligent Robots and Systems (IROS)},
  pages={11218--11225},
  year={2023},
  organization={IEEE}
}

@INPROCEEDINGS{11312999,
  author={Massai, Leonardo and Ferrari-Trecate, Giancarlo},
  booktitle={2025 IEEE 64th Conference on Decision and Control (CDC)}, 
  title={Free Parametrization of L2-bounded State Space Models}, 
  year={2025},
  volume={},
  number={},
  pages={7012-7017},
  doi={10.1109/CDC57313.2025.11312999}}

@article{revay2023recurrent,
  title={Recurrent equilibrium networks: Flexible dynamic models with guaranteed stability and robustness},
  author={Revay, Max and Wang, Ruigang and Manchester, Ian R},
  journal={IEEE Transactions on Automatic Control},
  volume={69},
  number={5},
  pages={2855--2870},
  year={2023},
  publisher={IEEE}
}

@inproceedings{kon2023directlearningparametervaryingfeedforward,
  title={Direct learning for parameter-varying feedforward control: A neural-network approach},
  author={Kon, Johan and Van De Wijdeven, Jeroen and Bruijnen, Dennis and T{\'o}th, Roland and Heertjes, Marcel and Oomen, Tom},
  booktitle={2023 62nd IEEE Conference on Decision and Control (CDC)},
  pages={3720--3725},
  year={2023},

}

@ARTICLE{10633771,
  author={Furieri, Luca and Galimberti, Clara Lucía and Ferrari-Trecate, Giancarlo},
  journal={IEEE Open Journal of Control Systems}, 
  title={Learning to Boost the Performance of Stable Nonlinear Systems}, 
  year={2024},
  volume={3},
  number={},
  pages={342-357},
  doi={10.1109/OJCSYS.2024.3441768}}

@INPROCEEDINGS{11312988,
  author={Kirsch, Nicolas and Massai, Leonardo and Ferrari-Trecate, Giancarlo},
  booktitle={2025 IEEE 64th Conference on Decision and Control (CDC)}, 
  title={Boosting the transient performance of reference tracking controllers with neural networks}, 
  year={2025},
  volume={},
  number={},
  pages={8015-8021},
  doi={10.1109/CDC57313.2025.11312988}}

@inproceedings{heemels2012introduction,
  author    = {Heemels, W. P. M. H. and Johansson, Karl H. and Tabuada, Paulo},
  title     = {An introduction to event-triggered and self-triggered control},
  booktitle = {51st IEEE Conference on Decision and Control (CDC)},
  pages     = {3270--3285},
  year      = {2012}
}

@article{tabuada2007event,
  author  = {Tabuada, Paulo},
  title   = {Event-triggered real-time scheduling of stabilizing control tasks},
  journal = {IEEE Transactions on Automatic Control},
  volume  = {52},
  number  = {9},
  pages   = {1680--1685},
  year    = {2007}
}

@article{tallapragada2013event,
  author  = {Tallapragada, Pavankumar and Chopra, Nikhil},
  title   = {On event triggered tracking for nonlinear systems},
  journal = {IEEE Transactions on Automatic Control},
  volume  = {58},
  number  = {9},
  pages   = {2343--2348},
  year    = {2013}
}

@article{wang2011event,
  author  = {Wang, Xiaofeng and Lemmon, Michael D.},
  title   = {Event-triggering in distributed networked control systems},
  journal = {IEEE Transactions on Automatic Control},
  volume  = {56},
  number  = {3},
  pages   = {586--601},
  year    = {2011}
}

@article{ali2025optimal,
  title={Optimal micro-grid battery scheduling within a comprehensive smart pricing scheme},
  author={Ali, Mohammed Ashraf and Besheer, Ahmad H and Emara, Hassan M and Bahgat, Ahmed},
  journal={Scientific Reports},
  volume={15},
  number={1},
  pages={20098},
  year={2025},
  publisher={Nature Publishing Group UK London}
}

@inproceedings{yen2000intelligent,
  title={Intelligent fault tolerant control using artificial neural networks},
  author={Yen, Gary G and Ho, Liang-Wei},
  booktitle={Proceedings of the IEEE-INNS-ENNS International Joint Conference on Neural Networks. IJCNN 2000. Neural Computing: New Challenges and Perspectives for the New Millennium},
  volume={1},
  pages={266--271},
  year={2000},
}

@inproceedings{zimmerman2014neural,
  title={Neural network based obstacle avoidance using simulated sensor data},
  author={Zimmerman, Timothy A},
  booktitle={2014 ASEE Zone 1 Conference},
  year={2014}
}

@article{bonzanini2024perception,
  title={Perception-aware model predictive control for constrained control in unknown environments},
  author={Bonzanini, Angelo D and Mesbah, Ali and Di Cairano, Stefano},
  journal={Automatica},
  volume={160},
  pages={111418},
  year={2024},
  publisher={Elsevier}
}

@INPROCEEDINGS{8593739,
  author={Falanga, Davide and Foehn, Philipp and Lu, Peng and Scaramuzza, Davide},
  booktitle={2018 IEEE/RSJ International Conference on Intelligent Robots and Systems (IROS)}, 
  title={{PAMPC}: Perception-Aware Model Predictive Control for Quadrotors}, 
  year={2018},
  volume={},
  number={},
  pages={1-8},
  doi={10.1109/IROS.2018.8593739}}

@article{kirsch2025resilient,
  title={Resilient {AFE} Drive Control using Neural Networks with Tracking Guarantees},
  author={Kirsch, Nicolas and Arghir, Catalin and Mastellone, Silvia and Ferrari-Trecate, Giancarlo},
  journal={arXiv preprint arXiv:2512.03545},
  year={2025}
}

\end{document}